\tikzset{
  >=latex,
  node distance=1cm
}
\tikzstyle{every state}=[draw=black, very thick, fill = black!10!white, minimum size = 2mm]
\newcommand{\tn}[1]{\textnormal{#1}}
\newcommand{\set}[1]{\left\{ #1 \right\}}
\newcommand{\seq}[1]{\langle #1 \rangle}
\def\rmdef{\stackrel{\mbox{\rm {\tiny def}}}{=}} 
\newcommand{\inter}[1]{[\![#1]\!]}
\newcommand{\sem}[1]{[\![#1]\!]}
\newcommand{\dom}[1]{\mathsf{dom}(#1)}
\newcommand{\im}[1]{\mathsf{im}(#1)}
\newcommand{\domain}{\mathsf{dom}}
\newcommand{\mso}{\tn{MSO}}
\newcommand{\dmsot}{\tn{DMSOT}}
\newcommand{\nmsot}{\tn{NMSOT}}
\newcommand{\wnmsot}{\tn{$\omega$-NMSOT}}
\newcommand{\wmsot}{\tn{$\omega$-MSOT}}
\newcommand{\wdmsot}{\tn{$\omega$-DMSOT}}
\newcommand{\sst}{\tn{SST}} 
\newcommand{\dmt}{\tn{DMT}} 
\newcommand{\nmt}{\tn{NMT}} 
\newcommand{\nbt}{\tn{NBT}} 
\newcommand{\dbt}{\tn{DBT}}
\newcommand{\nba}{\tn{NBA}}
\newcommand{\dma}{\tn{DMA}}
\newcommand{\nma}{\tn{NMA}}
\newcommand{\nsst}{\tn{NSST}}
\newcommand{\wnsst}{\tn{$\omega$-NSST}}
\newcommand{\wdsst}{\tn{$\omega$-DSST}}
\newcommand{\wsst}{\tn{$\omega$-SST}}
\newcommand{\gsm}{\tn{GSM}}
\newcommand{\tgsm}{\tn{2GSM}}
\newcommand{\ntgsm}{\tn{N2GSM}}
\newcommand{\pspace}{\textsc{Pspace}}
\newcommand{\nlogspace}{\textsc{NLogSpace}}
\newcommand{\npspace}{\textsc{NPspace}}
\newcommand{\wstring}{$\omega$-string}
\newcommand{\wlang}{$\omega$-language}
\newcommand{\wreg}{$\omega$-regular}
\newcommand{\Inf}[1]{\mathsf{Inf}(#1)}
\newcommand{\acc}{\mathsf{Acc}}
\newcommand{\rl}{\tn{RL}}
\newcommand{\Ff}{\mathbb{F}}
\newcommand{\M}{\mathcal{M}}
\renewcommand{\S}{\mathcal{S}}
\newcommand{\fF}{\mathbb{F}}
\newcommand{\nats}{\mathbb{N}}
\newcommand{\runs}[2]{\mathsf{Runs}_{#1} (#2)}
\newcommand{\val}[1]{\mathsf{val}_{#1}}
\newcommand{\out}{f}
\renewcommand{\vec}[1]{\mathbf{#1}}
\newcommand{\repeattheorem}[1]{%
  \begingroup
  \renewcommand{\thetheorem}{\ref{#1}}%
  \expandafter\expandafter\expandafter\theorem
  \csname reptheorem@#1\endcsname
  \endtheorem
  \endgroup
}
\xdef\csname reptheorem@#1\endcsname{%
    \unexpanded\expandafter{\BODY}%
  }%
\unskip\label{#1}\endtheorem
\newcommand{\repeatlemma}[1]{%
  \begingroup
  \renewcommand{\thelemma}{\ref{#1}}%
  \expandafter\expandafter\expandafter\lemma
  \csname replemma@#1\endcsname
  \endlemma
  \endgroup
}
\xdef\csname replemma@#1\endcsname{%
    \unexpanded\expandafter{\BODY}%
  }%
\unskip\label{#1}\endlemma
\newcommand{\init}{\textsc{init}}
\newcommand{\bad}{\textsc{bad}}
\newcommand{\Aa}{\mathcal{A}}
\newcommand{\letter}{\mathsf{Let}}
\begin{document}
\title{Regular Model Checking with Regular Relations}
%
%
\author{Vrunda Dave\inst{1} \and
Taylor Dohmen\inst{2} \and\\
Shankara Narayanan Krishna
\inst{1} \and
Ashutosh Trivedi\inst{2}}
\authorrunning{Dave, Dohmen, Krishna, and Trivedi}
%
\institute{IIT Bombay, Mumbai, India \\
\email{\{vrunda,krishnas\}@cse.iitb.ac.in} \and
Univeristy of Colorado, Boulder, USA \\
\email{\{taylor.dohmen,ashutosh.trivedi\}@colorado.edu}}

\maketitle

\begin{abstract}
Regular model checking is an exploration technique for infinite
state systems where state spaces are represented as regular languages and
transition relations are expressed using rational relations over infinite (or finite) strings.
We extend the regular model checking paradigm to permit the use of more powerful transition relations: the class of regular relations, of which the rational relations are a strict subset.
We use the language of monadic second-order logic (\mso{}) on infinite strings to specify such relations and adopt streaming string transducers (\sst{}s) as a suitable computational model.
We introduce nondeterministic \sst{}s over infinite strings (\wnsst{}s) and show that they precisely capture the relations definable in \mso{}.
We further explore theoretical properties of \wnsst{}s required to effectively
carry out regular model checking.
In particular, we establish that the regular type checking problem for \wnsst{}s is decidable in \pspace{}.
Since the post-image of a regular language under a regular relation may not be
regular (or even context-free), approaches that iteratively compute the image
can not be effectively carried out in this setting.
Instead, we utilize the fact that regular relations are closed under composition, which, together with our decidability result,
provides a foundation for regular model checking with regular relations.

\end{abstract}

\section{Introduction}
\label{sec:introduction}
Regular model checking \cite{AbdullaJonssonNilssonSaksena04,AbdullaJonssonNillsondOrsoSaksena12,BouajjaniJonssonNilssonTouili00,BoigelotWolper98,KestenMalerMarcusPnueliShahar01} is a symbolic  exploration and verification technique where sets of configurations are
expressed as regular languages and transition relations are encoded as
rational relations \cite{Schutzenberger76,Sakarovitch09,LodingSpinrath19} in the form of generalized sequential machines.
A generalized sequential machine (\gsm{}) is essentially a finite state machine with output capability; on every transition an input symbol is read, the state changes, and a finite string is appended to an output string (see \Cref{fig:gsm-double}, for instance, where the label $\alpha / s$ indicates that the machine reads the symbol $\alpha$ and writes the string $s$ on any such transition).
While regular model checking is undecidable in general, a number of
approximation schemes and heuristics \cite{BouajjaniJonssonNilssonTouili00,JonssonNilsson00,Touili01,DamsLakhnechSteffen01,AbdullaJonssonNilssondOrso02,BoigelotLegayWolper03,BouajjaniHabermehlVojnar04,HabermehlVojnar05}
have made it a practical verification approach.
It has, for example, been applied to verify programs with unbounded data structures such as lists and stacks~\cite{BouajjaniJonssonNilssonTouili00,AbdullaJonssonNilssonSaksena04}.
Moreover, since infinite strings over a finite alphabet can be naturally interpreted as real numbers in the unit interval, regular
model checking over infinite strings provides a
framework \cite{BoigelotLegayWolper04,BoigelotWolper02,Legay12,BoigelotJodogneWolper05,BouajjaniLegayWolper05,LegayWolper10} to analyze
properties of dynamical systems.

\begin{wrapfigure}{r}{0.28\textwidth}
  \begin{center}
    \vspace{-2em}
    \scalebox{0.9}{
    \begin{tikzpicture}[node distance = 2.5cm, thick, >= stealth]
      \node[state,initial above] (1) {$s_0$};
      \node[state,accepting]          (2)[right of = 1] {$s_1$};
      \path[->] (1) edge[bend left]  node[above] {$\alpha/0\alpha$} (2);
      \path[->] (1) edge[bend right]  node[below] {$\alpha/00\alpha$} (2);
      \path[->] (2) edge[loop above] node[above] {$1/1$} (2);
      \path[->] (2) edge[loop below] node[below] {$0/0$}  (2);
    \end{tikzpicture}
    }
  \end{center}
  \vspace{-2em}
  \caption{\footnotesize A \gsm{} that shifts a string to the right by $1$ or $2$, or equivalently realizing division of the binary encoding of real numbers in $[0,1]$ by $2$ or $4$.
  }
  \label{fig:gsm-double}
  \vspace{-1.5em}
\end{wrapfigure}

This paper generalizes the regular model checking approach so that transition relations can be expressed using \emph{regular relations} over infinite strings.
We propose the computational model of
nondeterministic streaming string transducers on infinite strings (\wnsst{}),
and explore theoretical properties of \wnsst{}s required to effectively
carry
out regular model checking.

\noindent\textbf{Regular Relations.}
While rational relations are capable of modelling a rich set of transition systems, their limitations can be observed by noting their inability to express common transformations such as $\mathsf{copy} \rmdef w \mapsto ww$ and $\mathsf{reverse} \rmdef w \mapsto \overleftarrow{w}$, where
the string $\overleftarrow{w}$ is the reverse of the string $w$.
Courcelle \cite{Courcelle94,CourcelleEngelfriet12} initiated the use of monadic second-order logic (\mso{}) in defining deterministic and nondeterministic graph-to-graph transformations which are known to include
some non-rational transformations like \textsf{copy} and \textsf{reverse}.
Engelfriet and Hoogeboom~\cite{EngelfrietHoogeboom01} showed that  deterministic \mso{}-definable transformations (\dmsot{}) over finite strings coincide exactly with the transformations that can be realized by generalizations of \gsm{}s that can read inputs in two directions (\tgsm).
Furthermore, they showed that this correspondence does not extend to the set of nondeterministic \mso{}-definable transformations (\nmsot{}) and nondeterministic \tgsm{}s (\ntgsm{}).
\begin{wrapfigure}{l}{0.3\textwidth}
  \begin{center}
    \vspace{-2em}
    \scalebox{0.9}{
    \begin{tikzpicture}
      \node[state,initial,accepting,initial text=] (A) {$s_0$};
      \path[->] (A) edge[loop above] node {$a \begin{cases}x := ax\end{cases}$} (A);
      \path[->] (A) edge[loop below] node {$b \begin{cases}x := bx\end{cases}$} (A);
      \path[->] (A) edge node[above] {$x$} (1,0);
    \end{tikzpicture}
    }
  \end{center}
  \vspace{-2em}
  \caption{\footnotesize\sst{} implementing \textsf{reverse}.
  Here, $x$ is a string variable and input strings ending in the final state $s_0$ output variable $x$ (as shown by the label on the outgoing arrow from $s_0$.)
  }
  \label{fig:reverse}
  \vspace{-1.5em}
\end{wrapfigure}


Alur and {\v C}ern\'y \cite{AlurCerny10} proposed a one-way machine capable of realizing the same transformations as \dmsot{}s.
These machines, known as \emph{streaming string transducers} (\sst{}), work by storing and combining partial outputs in a finite set of variables, and enjoy a number of appealing properties including decidability of functional equivalence and type-checking (see \Cref{fig:reverse} for an \sst{} realization of \textsf{reverse}).
Alur and Deshmukh followed up this work by introducing nondeterministic streaming string transducers (\nsst{}) as a natural generalization \cite{AlurDeshmukh11} and proved this model captures precisely the same set of relations as \nmsot{}s.
Since the connection between automata and logic is often used as a yardstick for regularity, \mso{}-definable functions and relations over finite strings are often called regular functions and regular relations.

\noindent\textbf{Regular Relations over Infinite Strings.} The
expressiveness of \sst{}s and \mso{}-definable transformations also
coincide when representing functions over infinite strings \cite{AlurFiliotTrivedi12}.
Deterministic \sst{}s operating on infinite strings are known as \wdsst{}s, however, for regular relations of infinite strings, no existing computational model exists.
We combine and generalize results in the literature on \nsst{}s and \wdsst{}s to propose the computational model of nondeterministic streaming $\omega$-string transducers (\wnsst{}) that capture
regular relations of infinite strings.


\begin{figure}
  \begin{tikzpicture}[node distance = 4cm, thick, >= stealth]
    \node[state,initial,accepting] (1) {$s_0$};
    \node[state,accepting]         (2)[right = 6cm of 1] {$s_1$};
    \node[state,accepting]         (3)[right = 3cm of 2] {$s_2$};
    
    \path[->] (1) edge[loop above]
    node[above] {$\alpha       \begin{cases}
        w := w\alpha
      \end{cases}$
} (1);
    \path[->] (1) edge[loop below]
    node[below]
    {
      $\#
      \begin{cases}
        w := \varepsilon \\
        z := z \#
      \end{cases}$
    } (1);
    \path[->] (1) edge node[below,pos=0.7]
         {$\alpha
           \begin{cases}
            w := w\alpha \\
             x := x\alpha \\
             y := \alpha y
           \end{cases}$
         } (2);
  \path[->] (2) edge[loop above]
  node[above]
  {$\alpha
           \begin{cases}
            w := w\alpha \\
             x := x\alpha \\
             y := \alpha y
           \end{cases}$
           }   (2);
    \path[->] (2) edge[bend right] node[above]
         {$\#
           \begin{cases}
            w := \varepsilon \\
             x := \varepsilon \\
             y := \varepsilon \\
             z := zyx\#
           \end{cases}$
         } (1);
    \path[->] (2) edge node[above]
         {$\alpha
           \begin{cases}
             w := \varepsilon \\
             z := zw\alpha
           \end{cases}$
         } (3);
    \path[->] (1) edge[bend right=35] node[below=0.35cm, pos=0.99]
         {$\alpha
           \begin{cases}
              w := \varepsilon \\
              z := zw\alpha
           \end{cases}$
         } (3);
  \path[->] (3) edge[loop above]
  node[above]
  {$\alpha
           \begin{cases}
             z := z\alpha
           \end{cases}$
           }   (3);
  \end{tikzpicture}
  \caption{\footnotesize An $\wnsst{}$ implementing the relation $R_{\protect\overleftarrow{u}u}$ from \Cref{ex:first}. Let $\alpha$ denote all symbols in $A$, excluding \#. Variable $w$ remembers the string since the last \#, while $x$ and $y$ store the chosen suffix and its reverse. The output variable is $z$.}
  \label{fig:wnsst}
  \vspace{-2em}
\end{figure}

\begin{example}
  \label{ex:first}
  Let $A$ be a finite alphabet and $\#$ be a special separator not in $A$.
  For $u, v \in A^*$, we say that $v \preceq u$ if $v$ is a suffix of $u$.
  Consider a relation $R_{\overleftarrow{u}u}$ that transforms strings in $(A \cup \set{\#})^\omega$ such that each maximal $\#$-free finite substring $u$ occurring in the input string is transformed into $\overleftarrow{v}v$ for some suffix $v$ of $u$.
  Formally, $R_{\overleftarrow{u}u}$ is defined as 
  \begin{multline*}
    \set{ 
      \left(u_1\#\cdots\#u_n\#w, 
      \overleftarrow{v_1}v_1\#\cdots\#\overleftarrow{v_n}v_n\#w \right)
      : u_i, v_i \in A^*, w \in A^\omega, \text{ and } v_i \preceq  u_i}
    \\
    \cup 
    \set{ 
      \left(u_1\#u_2\#\ldots, 
      \overleftarrow{v_1}v_1\#\overleftarrow{v_2}v_2\#\ldots\right)
      : u_i, v_i \in A^* \text{ and } v_i \preceq  u_i},
  \end{multline*}
  and can be implemented as an $\wnsst{}$ with B\"uchi acceptance condition (accepting states are visited infinitely often for accepting strings) as shown in \Cref{fig:wnsst}.
\end{example}

\noindent\textbf{Contributions and Outline.}
In \Cref{sec:wnsst} we introduce \wnsst{}s and their semantics as a computational model for regular relations.
In \Cref{sec:mso_nsst_eq} we prove that the \wnsst{}-definable relations coincide exactly with \mso{}-definable relations of infinite strings.
In \Cref{sec:dec} we consider regular model checking with regular relations. 
To enable regular model checking with regular relations, we study the following key verification problem.
The \emph{type checking problem} for \wnsst{}s asks to decide, given
two \wreg{} languages $L_1,L_2$ and an \wnsst{}, whether $\inter{T}(L_1) \subseteq L_2$, where $\inter{T}$ is the regular relation implemented by $T$.
We show that type checking for \wnsst{}s is decidable in \pspace{}.

\section{Regular Relations for Infinite Strings}
\label{sec:wnsst}
An alphabet $A$ is a finite set of letters.
A \emph{string} $w$ over an alphabet $A$ is a finite sequence of symbols in $A$.
We denote the empty string by $\varepsilon$.
We write $A^*$ for the set of all finite strings over $A$, and for $w \in A^*$ we write $|w|$ for its length. 
A language $L$ over $A$ is a subset of $A^*$.
An \emph{\wstring{}} $x$ over $A$ is a function $x: \nats {\to} A$, and written as $x = x(0) x(1) \cdots$. 
We write $A^\omega$ for the set of all \wstring{}s over $A$, and $A^\infty$ for $A^* \cup A^\omega$.
An \wlang{} $L$ over $A$ is a subset of $A^\omega$.

\subsection{MSO Definable Relations}
\label{subsec:transducers}
Strings may be viewed as ordered structures encoded over the signature $\S_A = \{(a)_{a \in A}, <\}$ and interpreted with respect to $A^*$ or $A^\omega$.
The domain of a string in this context refers to the set of valid positions in the string, and the relation $<$ in $\S_A$ ranges over this domain.
The expression $a(x)$ holds true if the symbol at position $x$ is $a$, and $x < y$ holds if $x$ is a lesser index than $y$.

Formulae in \mso{} over $\S_A$ are defined relative to a
countable set of first-order variables $x, y, z, \ldots$ that range over individual
elements of the domain and a countable set of second-order variables $X, Y,
Z, \ldots$ that range over subsets of the domain. 
The syntax for well-formed formulae is given as:
\begin{equation*}
  \label{eq:mso-syntax}
  \phi ::= \exists X.\ \phi \mid \exists x.\ \phi \mid \phi \wedge \phi \mid \phi \vee \phi \mid \neg \phi \mid a(x) \mid x < y \mid x \in X
\end{equation*}

\mso{} transducers are particular specifications in this logic that define transformations between strings.
Intuitively, each such transducer copies each input string some fixed number of times and treats the positions in each copy as nodes in a graph, which are then relabeled and and rearranged in accordance with the formulae of the transducer to produce an output.

\begin{definition}
  A deterministic \mso{} \wstring{} transducer (\wdmsot{}) is a tuple
\begin{equation*}
  \left( A, B, \domain, N, (\phi^n_b(x))^{n \in N}_{b \in B}, (\psi^{n,m}(x,y))^{n,m \in N} \right),
\end{equation*}
where $A$ and $B$ are input and output alphabets, $N = \set{1,\ldots,n}$ is a set of copy indices, $\domain$ is an \mso{} sentence that defines an input language, the \emph{node formulae} $\left( \phi^n_b(x) \right)^{n \in N}_{b \in B}$ specify the labels of positions in the output, and the \emph{edge formulae} $(\psi^{n,m}(x,y))^{n,m \in N}$ specify which positions in the output will be adjacent.
\end{definition}

A \wdmsot{} operates over $N$ disjoint copies of the string graph of an input.
Each formula $\phi^n_b$ has a single free variable and should be interpreted such that if a position satisfies $\phi^n_b$, then that position will be
labeled by the symbol $b$ in the $n^{th}$ disjoint string graph comprising
the output. 
Each formula $\psi^{(n, m)}$ has two free variables and a satisfying pair of
indices indicates that there is a link between the former index in copy $n$ and
the latter index in copy $m$.

Nondeterminism is introduced through additional set variables $X_1, \dots, X_k$ called \emph{parameters}.
Fixing a valuation---sets of positions of the input graph satisfying the domain formula---of these parameters determines an output graph, just as in the deterministic case.
Each possible valuation may result in a different output graph for the same input graph, and thus nondeterminism arises from the choice of valuation.

\begin{definition}
A nondeterministic \mso{} \wstring{} transducer (\wnmsot{}) with $k$ free set variables $\vec{X}_k = (X_1,\ldots,X_k)$ is given as a tuple  
\begin{equation*}
  \left( A, B, \domain(\vec{X}_k), N, (\phi^n_b(x, \vec{X}_k))^{n \in N}_{b \in B}, (\psi^{n,m}(x, y, \vec{X}_k))^{n,m \in N} \right),
\end{equation*}
where all formulae are parameterized by the free second-order variables in addition to the required first-order parameters.
\end{definition}

A relation between strings is a \emph{regular relation} if it is definable by a \wnmsot{}.
Since \wdmsot{}s can map each input to at most one output, the relations definable by \wdmsot{}s are called the \emph{regular functions}.

\begin{example}
\label{ex:msot}
We now describe a \wnmsot{} capturing the relation given in \Cref{ex:first}.
Set $A = \set{a,b,\#} = B$, $N = \set{1, 2}$, and consider a single parameter $\vec{X}_1 = \set{X_1}$.
The domain of the relation is simply $A^\omega$, so we omit the formula.
For all symbols $\beta \in B$ and copy indices $n \in N$, the node formulae labels each position with the same symbol as the corresponding position in the input string: $\phi^n_\beta(x, X_1) \rmdef \beta(x)$.
In the interest of space, we defer formal specifications of the edge formulae to the appendix (cf. \Cref{sec:examples}) and describe the edge formulae informally. 
The formula for edges from copy 1 to copy 1 connects adjacent non-$\#$ positions that belong to $X_1$ in the reverse order.
The formula for edges from copy 1 to copy 2 connects non-$\#$ positions to themselves when the predecessor position is not in $X_1$.
The formula for edges from copy 2 to copy 2 links the right-most sequence of positions in $X_1$ that preceed a $\#$ symbol and also connect all those positions coming after the final $\#$ if required.
Finally, the formula for edges from copy 2 to copy 1 links $\#$ symbols to the last position in $X_1$ occurring left of the subsequent $\#$. 

Two possible outputs from the relation of \Cref{ex:first} are displayed in \Cref{fig:mso} which shows how the above \wnmsot{} constructs an output string for two different valuations of $X_1$.
A 1 in the blue (resp. green) row signifies that the position at that column is an element of $X_1$, while a 0 indicates that it is not an element of $X_1$.
\end{example}


\begin{figure}[t]
  \resizebox{\textwidth}{!}{%
   \centering
    \tikzstyle{trans}=[-latex, rounded corners]
    \begin{tikzpicture}[->,>=stealth',shorten >=1pt,auto,node distance=2cm,thick]
      \node (B) {$a$} ;
      \node at (2,0) (B0) {$b$} ;
      \node at (4,0) (B1) {$b$} ;
      \node at (6,0) (B2) {$b$} ;
      \node at (8,0) (B3) {$\#$} ;
      \node at (10,0) (B4) {$b$} ;
      \node at (12,0) (B5) {$a$} ;
      \node at (14,0) (B6) {$\#$} ;
      \node at (16,0) (B7) {$a^\omega$} ;
      
      \draw[trans] (B) -- (B0);
      \draw[trans] (B0) -- (B1); 
      \draw[trans] (B1) -- (B2); 
      \draw[trans] (B2) -- (B3); 
      \draw[trans] (B3) -- (B4); 
      \draw[trans] (B4) -- (B5); 
      \draw[trans] (B5) -- (B6); 
      \draw[trans] (B6) -- (B7); 
      
      \node at (0, -1) (C) {$a$} ;
      \node at (2,-1) (C0) {$b$} ;
      \node at (4,-1) (C1) {$b$} ;
      \node at (6,-1) (C2) {$b$} ;
      \node at (8,-1) (C3) {$\#$} ;
      \node at (10,-1) (C4) {$b$} ;
      \node at (12,-1) (C5) {$a$} ;
      \node at (14,-1) (C6) {$\#$} ;
      \node at (16,-1) (C7) {$a^\omega$} ;
      
      \draw[trans,color=blue] (C1) -- (C0); 
      \draw[trans,color=blue] (C2) -- (C1); 
      \draw[trans,color=blue] (C5) -- (C4); 
      
      \node at (0, -2) (D) {$a$} ;
      \node at (2,-2) (D0) {$b$} ;
      \node at (4,-2) (D1) {$b$} ;
      \node at (6,-2) (D2) {$b$} ;
      \node at (8,-2) (D3) {$\#$} ;
      \node at (10,-2) (D4) {$b$} ;
      \node at (12,-2) (D5) {$a$} ;
      \node at (14,-2) (D6) {$\#$} ;
      \node at (16,-2) (D7) {$a^\omega$} ;
      
      \draw[trans,color=blue] (C0) -- (D0);
      \draw[trans,color=blue] (D0) -- (D1); 
      \draw[trans,color=blue] (D1) -- (D2); 
      \draw[trans,color=blue] (D2) -- (D3); 
      \path[->] (D3) edge[trans,color=blue, bend right] (C5); 
      \draw[trans,color=blue] (C4) -- (D4); 
      \draw[trans,color=blue] (D4) -- (D5); 
      \draw[trans,color=blue] (D5) -- (D6); 
      \draw[trans,color=blue] (D6) -- (D7);

      \path[->] (C2) edge[trans,color=black!25!green, bend right] (C1); 

      \draw[trans,color=black!25!green] (C1) -- (D1);
      \path[->] (D1) edge[trans,color=black!25!green, bend right] (D2); 
      \path[->] (D2) edge[trans,color=black!25!green, bend right] (D3); 
      \path[->] (D3) edge[trans,color=black!25!green, bend left] (C5); 
      \path[->] (C5) edge[trans,color=black!25!green] (D5); 
      \path[->] (D5) edge[trans,color=black!25!green, bend right] (D6); 
      \path[->] (D6) edge[trans,color=black!25!green, bend right] (D7);
      
      {\color{blue} \node at (0, -3) (E) {0} ;
      \node at (2,-3) (E0) {1} ;
      \node at (4,-3) (E1) {1} ;
      \node at (6,-3) (E2) {1} ;
      \node at (8,-3) (E3) {0} ;
      \node at (10,-3) (E4) {1} ;
      \node at (12,-3) (E5) {1} ;
      \node at (14,-3) (E6) {0} ;
      \node at (16,-3) (E7) {} ;}

      {\color{black!25!green} \node at (0, -4) (F) {0} ;
      \node at (2,-4) (F0) {0} ;
      \node at (4,-4) (F1) {1} ;
      \node at (6,-4) (F2) {1} ;
      \node at (8,-4) (F3) {0} ;
      \node at (10,-4) (F4) {0} ;
      \node at (12,-4) (F5) {1} ;
      \node at (14,-4) (F6) {0} ;
      \node at (16,-4) (F7) {} ;}
    \end{tikzpicture}
  }
  \vspace{-2em}
  \caption{Two possible outputs of the relation given in \Cref{ex:first} constructed according ot the \wnmsot{} from \Cref{ex:msot}.}
  \label{fig:mso}
\end{figure}  

\subsection{Nondeterministic Streaming String Transducers}
\label{subsec:wnsst}
\begin{definition}
  A nondeterministic streaming string transducer $T$ over \wstring{}s (\wnsst{}) is a tuple $(A, B, S, I, \acc, \Delta, f, X, U)$, where 
\begin{itemize}
  \item $A$ and $B$ are finite input and output alphabets, 
  \item $S$ is a finite set of states, 
  \item $I \subseteq Q$ is a set of initial states, 
  \item $\acc$ is an acceptance condition, 
  \item $X$ is a finite set of string variables, 
  \item $U$ is a finite set of variable update functions of type $X \to (X \cup B)^*$, 
  \item $\Delta$ is a transition function of type $(S \times A) \to 2^{U \times S}$, and 
  \item $f \in X$ is an append-only output variable.
\end{itemize}
Such a machine is deterministic (a \wdsst{}) if $|\Delta(s,a)| =  1$, for all states $s \in S$ and symbols $a \in A$, and $|I| = 1$; it is nondeterministic otherwise.
\end{definition}

On each transition $s_k \xrightarrow[u_k]{a_k} s_{k+1}$, the transducer
changes state and applies the update $u_k$ to each variable of $X$ in
parallel. 
An \wnsst{} is \emph{copyless} if every variable in $X$ occurs at most once in the image $\im{u}$ of every update $u \in U$.
Alternately stated, an update $u \in U$ is copyless if the string $u(x_0)u(x_1) \ldots u(x_{n-1})$ has at most one occurrence of each $x \in X$, and an \wnsst{} is copyless if all of its updates are copyless.

A run of an \wnsst{} on an infinite string $a_1a_2\cdots \in A^\omega$ is an infinite sequence of states and
transitions $s_0 \xrightarrow[u_0]{a_0}
s_1 \xrightarrow[u_1]{a_1} \ldots$ where $s_0 \in I$ and $(s_{k+1}, u_k) \in \Delta(s_k, a_k)$ for all $k \in \nats$. 
Let $\runs{T}{w}$ be the set of all runs in $T$, given input $w$.
An update function $u : X \to (X \cup B)^*$ can easily be extended to $\widehat{u} : (X \cup B)^* \to (X \cup B)^*$ such that $\widehat{u}(w) \rmdef{} \varepsilon$ if $w = \varepsilon$, $\widehat{u}(w) \rmdef{} b\widehat{u}(w')$ if $w = bw'$, and $u(x)\widehat{u}(w')$ if $w = xw'$.
The effect of two updates $u_1, u_2 \in U$ in sequence can be \emph{summarized} by the function composition $\widehat{u}_1 \circ \widehat{u}_2$; likewise a sequence of updates of arbitrary length would be summarized by $\widehat{u}_0 \circ \widehat{u}_1 \circ \ldots \circ \widehat{u}_{n-1}$.
For notational convenience, we often omit the hats when the extension is clear from context.
Notice that if all updates in a sequence of compositions are copyless, then so is the entire summary.

A valuation is a function $X \to B^*$ mapping each variable to a string
value.
The initial valuation $\val{\varepsilon}$ of all variables is the empty
string $\varepsilon$. 
A valuation is well-defined after any finite prefix $r_n$ of a run $r$ and is computed as a composition of updates occurring on this prefix: $\val{r_n} = \val{\epsilon} \circ u_0 \circ u_1 \circ \cdots \circ u_{n-1}$.
The output $T(r) \rmdef \lim_{n \to \infty} \val{r_n}(f)$ of $T$ on $r$
is well-defined only if $r$ is accepted by $T$. 
Since the output variable $f$ is only ever appended to and never prepended, this limit exists and is an \wstring{} whenever $r$ is accepted, otherwise we set $T(r) = \bot$. 
The relation $\inter{T}$ realized by an \wnsst{} $T$ is given by $\inter{T} \rmdef \set{(w, T(r)) : r \in \runs{T}{w}}$.
An \wnsst{} $T$ is \emph{functional} if for every $w$ the set $\set{w' : (w,w') \in \inter{T}}$ has cardinality at most 1.

We consider both B\"uchi and Muller acceptance conditions for \wnsst{}s and reference these classes of machines by the initialisms \nbt{} and \nmt{} (\dbt{} and \dmt{} for their deterministic versions), respectively.
For a run $r \in \runs{T}{w}$, let $\Inf{r} \subseteq S$ denote the set of states visited infinitely often.
\begin{enumerate}
  \item A \emph{B\"uchi acceptance condition} is given by a set of states $F \subseteq S$ and is interpreted such that a \nbt{} is defined on an input $w \in A^\omega$ if there exists a run $r \in \runs{T}{w}$ for which $\Inf{r} \cap F \neq \emptyset$.
\item A \emph{Muller acceptance condition} is given as a set of sets $\fF = \set{F_0,\ldots,F_n} \subseteq 2^S$, interpreted such that a \nmt{} is defined on input $w \in A^\omega$ if there exists a run $r \in \runs{T}{w}$ for which $\Inf{r} \in \fF$.
\end{enumerate}


\begin{proposition}
  \label{thm:BMequal}
  A relation is \nbt{} definable if, and only if, it is \nmt{} definable.
\end{proposition}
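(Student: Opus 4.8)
The plan is to establish the two directions separately. The easy direction is that every \nbt{}-definable relation is \nmt{}-definable: given a \nbt{} with Büchi set $F \subseteq S$, we keep the same transition structure, variables, and updates, and simply replace the acceptance condition with the Muller family $\fF = \set{G \subseteq S : G \cap F \neq \emptyset}$. A run satisfies $\Inf{r} \cap F \neq \emptyset$ exactly when $\Inf{r} \in \fF$, so the set of accepting runs is unchanged and hence $\inter{T}$ is preserved verbatim. This requires no modification of the transducer's output behaviour at all.

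The substantive direction is that every \nmt{}-definable relation is \nbt{}-definable. The natural approach is to adapt the classical Muller-to-Büchi construction for $\omega$-automata (the ``index appearance'' or Rabin--McNaughton-style latest-appearance-record argument) to the transducer setting, taking care that the extra bookkeeping does not disturb the variables or the output variable $f$. Concretely, given a \nmt{} $T = (A,B,S,I,\fF,\Delta,f,X,U)$, for each accepting set $F_i \in \fF$ we build a component that guesses (via nondeterminism) a point after which the run stays inside $F_i$ and then verifies that every state of $F_i$ is visited infinitely often, e.g.\ by cycling through an enumeration of $F_i$ and declaring a Büchi-accepting state each time the enumeration completes. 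The new state space is (roughly) $S \uplus \bigl(S \times \fF \times 2^S\bigr)$ or a similar product tracking ``which $F_i$ we committed to'' and ``which elements of $F_i$ remain to be seen in the current round''; the initial states stay in the $S$-copy. Crucially, every transition in the new machine carries exactly the update $u$ of the corresponding transition of $T$, so the variable set $X$, the update set $U$, and the append-only output variable $f$ are all inherited unchanged, and the valuation $\val{r_n}$ along any run of the new machine is identical to that along the projected run of $T$. Thus for any input $w$, the accepting runs of the new \nbt{} are in output-preserving correspondence with the accepting runs of $T$, giving $\inter{T'} = \inter{T}$.

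The main obstacle to watch is the nondeterministic commitment: unlike the deterministic Muller-to-Büchi translation, which must be done deterministically and is genuinely delicate, here we are allowed to guess the tail behaviour, so the construction is closer to the (easy) nondeterministic Muller-to-Büchi translation for acceptors. The only real subtlety is ensuring that the guessed ``switch'' to the verification component, and the subsequent rounds of cycling through $F_i$, do not change the input symbol read or the update applied on any transition --- i.e.\ the bookkeeping lives purely in the control state and is transparent to $X$ and $f$. Once that invariant is made precise, correctness is a routine check: an infinite run of $T$ with $\Inf{r} = F_i$ lifts to a run of $T'$ that eventually enters the $F_i$-component and visits its Büchi states infinitely often, and conversely any Büchi-accepting run of $T'$ projects to a run of $T$ whose infinity set is exactly some $F_i \in \fF$, with the same sequence of updates and hence the same output.
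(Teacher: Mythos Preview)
Your proposal is correct and is exactly the approach the paper takes: the paper does not give a detailed argument at all, stating only that the equivalence follows from a straightforward application of the classical equivalence of nondeterministic B\"uchi and nondeterministic Muller automata, with the implicit point that the acceptance-condition translation lives purely in the control state and leaves the variable updates (and hence the output) untouched. Your write-up simply spells out those details.
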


The equivalence of \nbt{} and \nmt{} -definable relations follows from a straightforward application of the equivalence of nondeterministic B\"uchi automata and nondeterministic Muller automata.
Equivalence of these acceptance conditions in transducers allows us to switch between them whenever convenient.

\begin{remark} \label{remark:output}
Observe that \dmt{}s and functional \nmt{}s, both of which were introduced in~\cite{AlurFiliotTrivedi12}, have a slightly different output mechanism, which is defined as a function $\Omega : 2^S \rightharpoonup X^*$ such that the output string $\Omega(S')$ is copyless and of the form $x_1 \dots x_n$, for all $S' \subseteq S$ for which $\Omega(S') \neq \bot$.
Furthermore, there is the condition that if $s, s' \in S'$ and $a \in A$ s.t. $(u,s') \in \Delta(s, a)$, then 
(1) $u(x_k) = x_k$ for all $k < n$ and 
(2)  $u(x_n) = x_n w$ for some $w \in (X \cup B)^*$. 

In contrast, our definition has a unique append-only output variable $f \in X$.
However, our model with the Muller acceptance is as expressive as that studied in~\cite{AlurFiliotTrivedi12}.
One can use nondeterminism to guess a position in the input after which states in a Muller accepting set $S'$ will be visited infinitely often.
The output function can be defined by guessing a Muller set, and keeping an extra variable for the output.
Upon making the guess, it will move the contents of $x_1 \ldots x_n$ to the variable $f$ and make a transition to a copy $T_{S'}$ of the transducer where $\acc = \set{S'}$.
If any state outside the set $S'$ is visited, or the variables $x_1\ldots,x_{n-1}$ are updated, or the variable $f$ is assigned in non-appending fashion, then $T_{S'}$ makes a transition to a rejecting sink state. Alur, Filiot, and Trivedi~\cite{AlurFiliotTrivedi12} showed the equivalence of functional \nmt{} with \dmt{}.
This implies that the transductions definable using {\it functional} \nmt{}s or {\it functional} \nbt{}s (in our definition) are precisely those definable by \wdmsot{}. 
\end{remark}

\section{Equivalence of \wnmsot{} and \wnsst{}}
\label{sec:mso_nsst_eq}
Alur and Deshmukh \cite{AlurDeshmukh11} showed that relations over finite strings definable by nondeterministic \mso{} transducers coincide with those definable by nondeterministic streaming string transducers.
We generalize this result by proving that a relation is definable by an \wnmsot{} if, and only if, it is definable by an \wnsst{}.
We provide symmetric arguments to connect \wnsst{}, \wdsst{} and \wnmsot{}, \wdmsot{}, resulting in a simple proof.

Our arguments use the concept of a relabeling relation, following Engelfriet and Hoogeboom \cite{EngelfrietHoogeboom01}.
A relation $\rho \subseteq A^\omega \times B^\omega$ is a \emph{relabeling}, if there exists another relation $\rho' \subseteq A \times B$ such that $(aw, bv) \in \rho$ iff $(a,b) \in \rho'$ and $(w,v) \in \rho$. 
In other words, $\rho$ is obtained by lifting the  letter-to-letter relation $\rho'$, in a straight-forward manner, to \wstring{}s. Let 
$\letter(\rho)$ denote the letter to letter relation $\rho' \subseteq A \times B $ corresponding to $\rho$ 
and let \rl{} be the set of all such relabelings.

\begin{theorem}
  \label{thm:mso_nsst_eq}
  $\wnmsot{} = \wnsst{}$.
\end{theorem}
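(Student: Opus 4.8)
The plan is to prove the two inclusions $\wnmsot{} \subseteq \wnsst{}$ and $\wnsst{} \subseteq \wnmsot{}$ by routing each through the corresponding deterministic equivalence, which is already available: functional \nmt{}s (hence functional \wnsst{}s) capture exactly \wdmsot{}, by \Cref{remark:output} and the results of Alur--Filiot--Trivedi. The bridge in both directions is the notion of a relabeling relation $\rho \in \rl{}$. First I would observe the standard decomposition: every \wnmsot{} relation factors as a relabeling followed by a \wdmsot{}. Indeed, the free parameters $\vec{X}_k$ of a \wnmsot{} can be absorbed into the input alphabet by enlarging $A$ to $A \times \set{0,1}^k$, where the extra component at each position records membership in each $X_i$. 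Guessing the valuation of $\vec{X}_k$ nondeterministically is exactly a letter-to-letter relation $\rho' \subseteq A \times (A \times \set{0,1}^k)$ (each $a$ related to all $(a, \vec b)$), and once the annotation is fixed the \wnmsot{} becomes a genuine \wdmsot{} over the enlarged alphabet. Symmetrically, a \wnsst{} decomposes as a relabeling followed by a \wdsst{}: the nondeterministic choice of transition $(u,s') \in \Delta(s,a)$ can be encoded by annotating each input letter with the index of the chosen branch, turning the machine deterministic over $A \times \set{1,\dots,d}$ where $d$ bounds the branching degree.

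Given these two decompositions, the theorem reduces to showing that (i) the deterministic classes coincide, $\wdsst{} = \wdmsot{}$ — this is exactly \Cref{remark:output} together with \cite{AlurFiliotTrivedi12,AlurCerny10} — and (ii) both \wnmsot{} and \wnsst{} are closed under precomposition with a relabeling, and conversely every relabeling-then-deterministic composition lies in the nondeterministic class. Closure of \wdmsot{} under precomposition with a relabeling is the key structural fact borrowed from Engelfriet--Hoogeboom-style reasoning: a relabeling $\rho$ with $\letter(\rho) = \rho'$ is itself (nondeterministically) an \wnmsot{} with one parameter per letter of the output alphabet and node formulae reading off the guessed label, and the composition of an \wnmsot{}-relabeling with a \wdmsot{} is again an \wnmsot{} — this uses backwards substitution of the \wdmsot{}'s node/edge formulae into the relabeling formulae, which goes through over $\omega$-words just as over finite words since \mso{} on $\omega$-strings enjoys the same substitution machinery. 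The analogous fact for \sst{}s — precomposing a \wdsst{} by a relabeling yields a \wnsst{} — is immediate, since a relabeling is realized by a one-state \wnsst{} that on letter $a$ nondeterministically outputs some $b$ with $(a,b)\in\rho'$, and \wnsst{}s are closed under composition (or, more directly, one can fold the relabeling's guess into the product machine).

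Assembling: to get $\wnmsot{} \subseteq \wnsst{}$, take an \wnmsot{}, write it as $\rho \circ \tau$ with $\rho$ a relabeling and $\tau \in \wdmsot{}$; by (i) $\tau \in \wdsst{}$; by (ii) (\sst{} side) $\rho \circ \tau \in \wnsst{}$. For $\wnsst{} \subseteq \wnmsot{}$, write the \wnsst{} as $\rho \circ \sigma$ with $\sigma \in \wdsst{}$; by (i) $\sigma \in \wdmsot{}$; by (ii) (\mso{} side) $\rho \circ \sigma \in \wnmsot{}$. I would present both directions in parallel, emphasizing the symmetry, so the write-up stays short.

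The main obstacle I anticipate is making the two decomposition lemmas genuinely rigorous over infinite strings rather than merely asserting the finite-word analogues carry over. On the \sst{} side one must be careful that the Büchi/Muller acceptance condition is preserved under the relabeling product and under the det-by-annotation construction — in particular that annotating transitions with branch indices does not change which runs are accepting, which is routine but must be stated. On the \mso{} side, the substitution (``backward translation'') of a \wdmsot{} into a relabeling \wnmsot{} must respect the domain sentence and the fact that the output over $\omega$-words is required to be an infinite string; one should check the composed edge formulae still define a structure that is a single $\omega$-chain (linear order of type $\omega$) on the output positions — this is where a careful argument, rather than a wave of the hand, is needed, though it mirrors the finite-word proof of Alur--Deshmukh closely enough that no new idea is required.
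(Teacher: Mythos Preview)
Your proposal is correct and follows essentially the same route as the paper: both prove the equalities $\wnsst{} = \wdsst{} \circ \rl{}$ and $\wnmsot{} = \wdmsot{} \circ \rl{}$ via the relabeling decompositions you describe, and then chain them through the known equivalence $\wdmsot{} = \dmt{}$ from~\cite{AlurFiliotTrivedi12}. The only cosmetic differences are that the paper encodes the \sst{}-side relabeling over the alphabet $U \times S$ rather than over branch indices, and it dispatches the inclusion $\wdmsot{} \circ \rl{} \subseteq \wnmsot{}$ by citing closure of \wnmsot{} under composition~\cite{CourcelleEngelfriet12} rather than spelling out the backward-substitution argument.
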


The proof of \Cref{thm:mso_nsst_eq} proceeds in two stages.
In the first part (\Cref{lemma:nsst-relab}), we show that every \wnsst{} is equivalent to the composition of a nondeterministic relabeling and a \wdsst{}.
In the second part (\Cref{lemma:msot-relab}), we show that every \wnmsot{} is equivalent to the composition of a nondeterministic relabeling and a \wdmsot{}.
These two lemmas, in conjunction with the equivalence of \dmt{}s and functional \nmt{}s~\cite{AlurFiliotTrivedi12}, allow us to equate these two models of transformation via a simple assignment.

\begin{lemma}
\label{lemma:nsst-relab}
$\wnsst{} = \wdsst{} \circ \rl{}$
\end{lemma}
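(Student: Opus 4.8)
The plan is to prove the two inclusions $\wnsst{} \supseteq \wdsst{} \circ \rl{}$ and $\wnsst{} \subseteq \wdsst{} \circ \rl{}$ separately, with essentially all the work in the second.

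For the easy containment, $\wdsst{} \circ \rl{} \subseteq \wnsst{}$, I would argue that a nondeterministic relabeling is trivially realizable by an \wnsst{} (a single-state machine that nondeterministically picks, for each input letter $a$, some $b$ with $(a,b)\in\letter(\rho)$ and appends it to $f$), and that a \wdsst{} is a special case of an \wnsst{}. Since \wnsst{}-definable relations are closed under composition — this needs a brief justification: given $T_1$ with variables $X_1$ and $T_2$ with variables $X_2$, run $T_2$ on the output stream being produced by $T_1$; the subtlety is that $T_1$'s output variable $f_1$ is only finalized in the limit, so one instead simulates $T_2$ on the \emph{letters as they are appended to $f_1$}, which is legitimate precisely because $f_1$ is append-only — the composite relation is in \wnsst{}. (Alternatively one simply invokes closure of regular relations under composition once \Cref{thm:mso_nsst_eq} is available, but since this lemma is used to prove that theorem, I would give the direct simulation.)

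For the hard containment, $\wnsst{} \subseteq \wdsst{} \circ \rl{}$, let $T = (A, B, S, I, \acc, \Delta, f, X, U)$ be an \wnsst{}; by \Cref{thm:BMequal} assume $\acc$ is a Muller condition. The idea, following the finite-string argument of Alur and Deshmukh, is to \emph{externalize the nondeterministic choices}: introduce an extended alphabet $A' = A \times U \times S$ (a letter records the input symbol together with the update and target state chosen on that step), let $\rho$ be the relabeling that nondeterministically annotates each input letter $a$ with some pair $(u,s)$, so $\letter(\rho) = \{(a,(a,u,s)) : a \in A, (u,s) \in \text{something}\}$, and then build a \emph{deterministic} \wsst{} $T'$ over input alphabet $A'$ that (i) checks the annotation describes a genuine run of $T$ — i.e.\ consecutive annotations are $\Delta$-consistent and the first state lies in $I$ — rejecting (via a sink) otherwise, (ii) deterministically replays the annotated updates on the variable set $X$, and (iii) uses the annotated state sequence to implement $T$'s Muller acceptance. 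Then $\inter{T} = \inter{T'} \circ \rho$ essentially by construction: each run of $T$ on $w$ corresponds to exactly one $w' \in A'^\omega$ with $(w,w')\in\rho$ on which $T'$ is defined, and conversely.

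The main obstacle I expect is the interaction between the acceptance condition and the append-only-single-output-variable restriction in our definition of \wnsst{} (as opposed to the $\Omega : 2^S \rightharpoonup X^*$ output mechanism discussed in \Cref{remark:output}). The deterministic machine $T'$ must commit, in the limit, to the correct output; but whether the annotation stream is Muller-accepting is only determined in the limit, and $T'$ cannot guess. The fix is to let $T'$ itself be nondeterministic at exactly this point — but that reintroduces nondeterminism, so instead one pushes the "guess which Muller set $S'$ is realized and at which position the run stabilizes into $S'$" \emph{into the relabeling}: the extended alphabet also carries a marker for the guessed Muller set and a one-time marker for the stabilization point, $T'$ deterministically verifies this guess (checking no state outside $S'$ occurs after the marker and every state of $S'$ recurs — the latter checkable by a deterministic cyclic counter over $S'$), and $T'$ deterministically moves the relevant variable contents into $f$ at the marked position, exactly mirroring the construction sketched in \Cref{remark:output}. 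I would also need to verify the copyless property is preserved — replaying copyless updates of $T$ keeps $T'$ copyless, and the bookkeeping variables are only appended to — and that $\rho$ restricted to the $w'$ on which $T'$ is defined induces precisely the runs of $T$, so no spurious outputs are introduced.
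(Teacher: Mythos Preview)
Your overall architecture matches the paper's, but you overcomplicate both directions.

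For the easy containment $\wdsst{} \circ \rl{} \subseteq \wnsst{}$, the paper avoids any appeal to closure under composition: given a \dmt{} $T$ over input alphabet $B$ and a relabeling $\rho \subseteq A^\omega \times B^\omega$, it builds a single \nmt{} $T'$ by keeping everything about $T$ except the input alphabet and transition function, setting $\Delta_\rho(s,a) \rmdef \bigcup_{(a,b)\in\letter(\rho)} \Delta(s,b)$. This absorbs the relabeling nondeterminism directly into the transition relation and sidesteps the need to argue closure of \wnsst{} under composition (which, as you note, would be circular if done via \Cref{thm:mso_nsst_eq}, and nontrivial to do directly).

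For the hard containment, the ``obstacle'' you anticipate is not one. The deterministic $T'$ does not need to guess or commit to anything about acceptance: it simply \emph{inherits} $T$'s state set $S$, variable set $X$, updates $U$, append-only output variable $f$, and Muller condition $\fF$ unchanged. On an annotated input it traces the annotated state sequence and replays the annotated updates; the output is $\lim_{n\to\infty}\val{r_n}(f)$ if that traced state sequence is Muller-accepting (with respect to $\fF$), and $\bot$ otherwise --- exactly as in $T$. Since $f$ was already append-only in $T$, it stays append-only in $T'$; copylessness is likewise preserved by replaying $T$'s own updates. So there is no need to push a Muller-set guess or a stabilization marker into the relabeling, and no need to invoke the construction from \Cref{remark:output}. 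The paper's proof is correspondingly short: define $\rho' = \{(a,(u,s')) : (u,s') \in \Delta(s,a)\text{ for some }s\}$, and let $T'$ on input letter $(u,s')$ from state $s$ move to $s'$ with update $u$ exactly when this is a legal $T$-transition, going to a rejecting sink otherwise. (One genuine small point you should address, which the paper glosses over, is the case $|I|>1$: since $\rho$ is position-independent you cannot mark ``first letter,'' so either include the \emph{source} state in the annotation alphabet and have a fresh single initial state check that the first source state lies in $I$, or argue that a single initial state can be assumed w.l.o.g.)
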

\begin{proof}
We first show $\wdsst{} \circ \rl{} \subseteq \wnsst{}$ by proving that for every \dmt{} $T \rmdef (B,C,S,I,\fF,\Delta,f,X,U)$ and nondeterministic relabeling $\rho \subseteq A^\omega \times B^\omega$, there is a \nmt{} $T' \rmdef (A,C,S,I,\fF,\Delta_\rho,f,X,U)$ such that $\sem{T'} = \sem{T} \circ \rho$.
As indicated by the tuple given to specify $T'$, the only distinct components between the two machines are their input alphabets and their transition functions $\Delta$ and $\Delta_\rho$.
The latter is given as $\Delta_\rho \rmdef{} (s, a) \mapsto \bigcup\limits_{(a, b) \in \letter(\rho)} \Delta(s, b)$.
The nondeterminism of $\rho$ is therefore captured in $\Delta_\rho$.
This results in a unique run through $T'$, for every possible relabeling of inputs for $T$.
Since the remaining pieces of $T$ are untouched in the process of constructing $T'$, it is clear that $\sem{T'} = \sem{T} \circ \rho$.

What remains to be shown is the inclusion $\wnsst{} \subseteq \wdsst{} \circ \rl{}$: for any \nmt{} $T \rmdef (A,B,S,I,\fF,\Delta,f,X,U)$, there exists a \dmt{} $T'$ and a nondeterministic relabeling $\rho$ such that $\sem{T} = \sem{T'} \circ \rho$.
From $T$, we can construct a nondeterministic, letter-to-letter relation $\rho' \subseteq A \times (U \times S)$ as follows:
$\rho' \rmdef{} \set{(a,(u,s')) : (u,s') \in \Delta(s,a)}$.
Now let $\rho \subseteq A^\omega \times (U \times S)^\omega$ be the extension of $\rho'$ as described previously.
The relation $\rho$ contains the set of all possible runs through $T$ for any possible input in $A^\omega$.

Next, we construct a \dmt{} $T' \rmdef (U \times S,B,S,I,\fF,\Delta_\rho,f,X,U)$ with transition function $\Delta_\rho \rmdef (s,(u,s')) \mapsto \set{(u,s') \::\: (u,s') \in \Delta(s,a) \text{ for some } a\in A}$. 
Consequently, $T'$ retains only the pairs in $\rho$ which correspond to valid runs $T$ and encodes them as \wstring{}s over the alphabet $S \times U$.
The \dmt{} $T'$ then simply follows the instructions encoded in its input and thereby simulates only legitimate runs through $T$.
Thus, we may conclude that $\sem{T} = \sem{T'} \circ \rho$.
\qed
\end{proof}

\begin{lemma}
\label{lemma:msot-relab}
$\wnmsot{} = \wdmsot \circ \rl{}$.
\end{lemma}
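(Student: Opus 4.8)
The plan is to prove the two inclusions $\wdmsot \circ \rl \subseteq \wnmsot$ and $\wnmsot \subseteq \wdmsot \circ \rl$ by explicit syntactic transformations of the defining formulae, mirroring the structure of the proof of \Cref{lemma:nsst-relab}. The common device in both directions is that a position-wise guess --- either a relabeling choice or a parameter-membership bit vector --- can be shuttled between the free set variables of an \mso{} transducer and the letters of an enriched alphabet, and that every \mso{} formula can be rewritten so as to read that information from whichever side currently stores it. The correctness of the rewriting is a routine induction on formula structure; everything else is bookkeeping about alphabets and copy sets.

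For $\wdmsot \circ \rl \subseteq \wnmsot$, I would start from a \wdmsot{} $T$ over input alphabet $B$ and a nondeterministic relabeling $\rho \subseteq A^\omega \times B^\omega$ with $\letter(\rho) = \rho' \subseteq A \times B$, and build a \wnmsot{} $T'$ over input alphabet $A$ equipped with $|B|$ fresh parameters $(X_b)_{b \in B}$, the intended meaning of $X_b$ being ``the set of positions that $\rho$ relabels to $b$''. Transform each formula of $T$ by replacing every letter test $b(y)$ with $y \in X_b$; write $\widetilde{\theta}$ for the result of this substitution applied to $\theta$. Then set the node formulae of $T'$ to $\phi'^n_c(x,\vec{X}) \rmdef \widetilde{\phi^n_c}(x)$, the edge formulae to $\psi'^{n,m}(x,y,\vec{X}) \rmdef \widetilde{\psi^{n,m}}(x,y)$, and the domain formula $\domain'(\vec{X})$ to the conjunction of $\widetilde{\domain}$ with the assertions that the $X_b$ partition the domain and that $x \in X_b$ implies $\bigvee_{(a,b) \in \rho'} a(x)$ for every position $x$. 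A valuation of $\vec{X}$ satisfying $\domain'$ on an input $w$ is then in bijection with a word $v \in B^\omega$ such that $(w,v) \in \rho$ and $v$ lies in the domain of $T$; under this bijection the output of $T$ on $v$ coincides with the output of $T'$ on $w$, so $\sem{T'} = \sem{T} \circ \rho$.

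For $\wnmsot \subseteq \wdmsot \circ \rl$, I would start from a \wnmsot{} $T$ over input alphabet $A$ with parameters $\vec{X}_k = (X_1,\ldots,X_k)$, and take $\rho \subseteq A^\omega \times (A \times \set{0,1}^k)^\omega$ to be the relabeling with $\letter(\rho) = \set{(a,(a,\vec{\beta})) : a \in A,\ \vec{\beta} \in \set{0,1}^k}$, which annotates every position with an arbitrary $k$-bit vector recording its membership in $X_1,\ldots,X_k$. Then let $T'$ be the transducer over input alphabet $A \times \set{0,1}^k$ obtained from $T$ by replacing every letter test $a(y)$ with $\bigvee_{\vec{\beta}} (a,\vec{\beta})(y)$ and every membership test $y \in X_i$ with $\chi_i(y) \rmdef \bigvee_{(a,\vec{\beta}) :\, \beta_i = 1} (a,\vec{\beta})(y)$. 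After this substitution no formula mentions a free set variable, so $T'$ is a \wdmsot{}. Since words $w'$ with $(w,w') \in \rho$ correspond exactly to parameter valuations on $w$, and $T'$ run on $w'$ recovers precisely the output $T$ produces under the corresponding valuation, we get $\sem{T} = \sem{T'} \circ \rho$.

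The only real obstacle is the verification that these syntactic substitutions commute with satisfaction: for every \mso{} formula $\theta$, the rewritten formula must hold of a word-with-parameters on one side exactly when $\theta$ holds of the corresponding enriched word on the other. This is a straightforward structural induction, but it must be carried out uniformly for the domain sentence as well as for the node and edge formulae, and in the first direction one additionally has to check that the partitioning and $\rho'$-compatibility clauses of $\domain'$ make the valuation-to-relabeling correspondence a genuine bijection, so that no spurious outputs are added and none are dropped. Once this is settled, combining the two inclusions with \Cref{lemma:nsst-relab}, \Cref{thm:BMequal}, and the equivalence of \dmt{}s with functional \nmt{}s from~\cite{AlurFiliotTrivedi12} yields \Cref{thm:mso_nsst_eq}.
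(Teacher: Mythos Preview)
Your proof is correct, and for the inclusion $\wnmsot \subseteq \wdmsot \circ \rl$ it coincides with the paper's argument (you are in fact slightly more careful, since you also spell out the rewriting of the atomic letter tests $a(y)$, which the paper leaves implicit).

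For the converse inclusion $\wdmsot \circ \rl \subseteq \wnmsot$ you take a genuinely different route. You build the composite \wnmsot{} \emph{directly}: introduce parameters $(X_b)_{b\in B}$ encoding the relabeling, replace each $b(y)$ in the formulae of the given \wdmsot{} by $y\in X_b$, and conjoin to the domain formula the clauses asserting that the $X_b$ form a partition compatible with $\letter(\rho)$. The paper instead argues in two steps: it observes that every relabeling is itself \wnmsot{}-definable (with exactly the parameters $X_b$ you introduce), and then invokes the closure of \wnmsot{} under composition from~\cite{CourcelleEngelfriet12}. Your approach is more elementary and self-contained---it avoids appealing to the general (and not entirely trivial) composition-closure theorem---at the price of having to verify the substitution lemma and the bijection between admissible parameter valuations and $\rho$-relabelings, which you correctly identify as the only point requiring care. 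The paper's approach is shorter on the page but imports a heavier external result.
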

\begin{proof}
We begin by showing the inclusion $\wnmsot{} \subseteq \wdmsot \circ \rl{}$: for any \wnmsot{} $T$, there exists an \wdmsot{} $T'$ and a relabeling $\rho$ such that $\inter{T} = \inter{T'} \circ \rho$.
Nondeterministic choice in $T$ is determined by the choice of assignment to free variables in $\vec{X}_k$.
Alternatively, the job of facilitating nondeterminism can be placed upon a relabeling relation, thereby allowing us to remove the parameter variables. 
Define a letter-to-letter relation $\rho' \subseteq A \times (A \times \set{0,1}^k)$ as follows: $\rho' \rmdef \set{(a, (a,b)) : b \in \set{0,1}^k}$, and let the relabeling $\rho \subseteq A^\omega \times (A \times \set{0,1}^k)^\omega$ be its extension.
This relabeling essentially gives us a new alphabet such that each symbol from $A$ is tagged with encodings of its membership status for each set parameter from $\vec{X}_k$.
Now, we can construct an \wdmsot{} $T'$ that is identical to $T$, apart from two distinctions.
Firstly, $T'$ is deterministic (i.e. it has no free set variables), and every occurrence of a subformula $x \in X_i$ in $T$ is replaced by a subformula $\bigvee\limits_{\substack{b \in \set{0,1}^k \wedge b[i] = 1}} (a, b)(x)$ in $T'$.
As a result of this encoding, the equality $\inter{T} = \inter{T'} \circ \rho$ holds. 

The converse inclusion, $\wdmsot \circ \rl \subseteq \wnmsot$, is much simpler.
Every relabeling $\rho$ in \rl{} is \wnmsot{} definable: consider $\rho' = \letter(\rho) \subseteq A \times B$. 
The $\wnmsot$ specifying $\rho$ is similar to identity/copy, except that here we have that the output label is $b$ iff the input label is $a$ and $(a, b) \in \rho'$.
This can be implemented using second-order variables $X_{b}$ for all $b \in B$. 
Let $\vec{X}_B$ represent this set. 
Only a single copy is required to produce the output.
Node formulae are given by $\phi^1_b(x,\vec{X}_B) \rmdef \bigvee\limits_{a \in A} \bigvee\limits_{(a, b) \in \rho'} (a(x) \wedge x \in X_b)$, and the edge formulae by $\psi^{1,1}(x, y,\vec{X}_B) \rmdef x < y$.
It is known that $\wnmsot{}$ are closed under composition~\cite{CourcelleEngelfriet12}.
Thus, we conclude that any composition of a nondeterministic relabeling and a \wdmsot{} is definable by a \wnmsot{} and that $\wmsot \circ \rl \subseteq \wnmsot$.
\qed
\end{proof}

In conjunction \Cref{lemma:nsst-relab,lemma:msot-relab} along with the results of \cite{AlurFiliotTrivedi12} allow us to write the following equation, thereby proving \Cref{thm:mso_nsst_eq}.
\begin{equation*}
  \wnmsot{} = \wdmsot{} \circ \rl{} = \dmt{} \circ \rl{} = \nmt{} = \wnsst{}
\end{equation*}

\section{MSO-Definable Regular Model Checking} 
\label{sec:dec}
In this section, we explain how algorithms for deciding properties of regular relations can be used to perform regular model checking.
Given two relations $T_1$ and $T_2$, their \emph{sequential composition} is $\sem{T_2 \circ T_1} \rmdef \set{(x,z) : (x,y) \in \sem{T_1}, (y,z) \in \sem{T_2}}$.
Let $T^k$ denote the $k$-fold composition of a relation $T$ with itself.
Let $T^*$ denote the transitive closure of $T$.

Suppose that $\init$ and $\bad$ are regular languages representing sets of states in some system that are initial, and unsafe, respectively.
Given a generic transition relation $T$ which captures the dynamics of the system, the \emph{regular model checking problem} asks to decide whether any element of $\bad$ is reachable from any element of $\init$ via repeated applications of $T$.
In precise terms, the regular model checking problem asks to decide whether the equation $\sem{T^*}(\init) \cap \bad = \emptyset$ holds.
Bounded model checking, in this setting, asks to decide, given $n \in \nats$, whether $\sem{T^k}(\init) \cap \bad = \emptyset$ holds, for all $k \leq n$.
Unbounded model checking is undecidable (cf. \Cref{sec:undec_proof} for a proof), even when $T$ is rational, so we focus on bounded model checking.

When $T$ is a rational relation, its image is always a regular language, and this permits the approach of iteratively applying $T$ from $\init$ and checking whether this set intersects with $\bad$ by standard automata-theoretic methods.
If $T$ is a regular relation, its image may not be a regular language, and we must iteratively compute compositions of $T$ with itself and test whether these compositions enter the $\bad$ language.
To allow this, we establish decidability of the \emph{type checking problem} for \wnsst{}s: given two $\omega$-regular languages $L_1,L_2$ and an \wnsst{} $T$, decide if the inclusions $L_1 \subseteq \dom{T}$ and $\sem{T}(L_1) \subseteq L_2$ hold.

\begin{reptheorem}{typecheck}
\label{thm:typecheck}
The type checking problem for \wnsst{}s is decidable in \pspace{}.
\end{reptheorem}




\begin{proof}
Suppose that $T \rmdef (A, B, S, I, F, \Delta, f, X, U)$ is an \nbt{} and $L_1 \subseteq A^\omega$ and $L_2 \subseteq B^\omega$ are \wreg{} languages, encoded, respectively, as deterministic Muller automata (\dma{}) $M_1$ and $M_2$.
We first check whether $ T$ is defined for all \wstring{}s $w \in L_1$,
i.e. whether $L_1 \subseteq \dom{T}$. 
A nondeterministic B\"uchi automaton (\nba{}) $\mathcal{C}$ that recognizes the domain of $T$ can be constructed in linear time by ignoring variables and output mechanism.
The inclusion $L_1 \subseteq \dom{T}$ can be decided in \pspace{} by checking emptiness of $M_1' \cap \mathcal{C}$ where $M_1'$ is the \nba{} equivalent to $M_1$ and $\mathcal{C}$ is the \nba{} representing the complement language of $\dom{T}$.
It is known that an \nba{} can be constructed from a \dma{} with exponential blowup in the number of states \cite{Boker18}. 
A complement automaton can be constructed for an \nba{} with exponential increase in the number of states as well \cite{Boker18}.
Hence $\mathcal{C}$ has exponentially many states relative to $T$ and $M_1$. 
Intersection of $M_1'$ and $\mathcal{C}$ is a standard product construction with a flag so that both $M_1'$ and $\mathcal{C}$ visit good states infinitely often. 
Thus the intersection \nba{} $M_1' \cap \mathcal{C}$ has exponentially many states relative to $T$ and $M_1$.
Thanks to the fact that emptiness of \nba{} can be checked in \nlogspace{}~\cite{Boker18}, the emptiness of this product automaton, can be decided in \npspace{} = \pspace{}.

We now assume that $T$ is well-defined on $L_1$ and construct a nondeterministic Muller automaton (\nma{}) $\Aa$ such that the language of $\Aa$ is defined as $\set{ w \in L_1 : \exists w' \in \sem{T}(w) \textnormal{ s.t. } w' \not\in L_2}$.
Next, we construct a \dma{} $\overline{M_2}$ for $\overline{L_2}$ by complementing the $\acc$ set.
The automaton $\Aa$ simulates $M_1$, $T$ and $\overline{M_2}$ in
parallel.
Next, we construct an \nmt{} $T'$ corresponding to the \nbt{} $T$ in order to homogenize the acceptance condition accross these machines.
Let us fix the definition for all three machines: (i) $M_1 \rmdef (A, S_{1}, p_0, \Ff_{1}, \Delta_{1})$, (ii) $T' \rmdef (A, B, S, I, \Ff', \Delta, {\out}, X, U)$, (iii) $\overline{M_2} \rmdef (B, S_{2}, r_{0}, \Ff_2, \Delta_{2})$.

The \nma{} $\Aa$ is defined as the product of $M_1$ and $T'$ (without the output mechanism), and it stores a state summary map---i.e. the effect of running current valuation of each variable starting from all states of $\overline{M_2}$---in each of its own states. 
Formally, the states of $\Aa$ comprise a finite subset of $S_1 \times S \times \left(S_{2} \times X \to S_2 \cup \set{\bot}\right)$.
A state $(q, p, g)$ with $g(r, x) = r'$ represents that, starting from state $r$, if we read the current value of variable $x$, then we reach state $r'$.
If $g(r, x) = \bot$, it indicates that there is no run on valuation of $x$ starting from $r$.
This information can be updated along the run of $\Aa$.
For instance, if a transition of $T$ updates $x$ as $aybx$, then the summary map $g$ is updated to $g'$ such that $g'(r, x) = g(\Delta_2(g(\Delta_2(r, a), y), b), x)$, and summarizes the effect of reading $x = aybx$ in $\overline{M_2}$ starting from state $r$.

The set of states of $\Aa$ is $S_{\Aa} = S_{1} \times S \times \left(S_2 \times X \rightarrow S_2 \cup \set{\perp}\right)$, in which $S_1$, $S$, and $S_2$ represent the state sets of $M_1$, $T'$, and $\overline{M_2}$, respectively.
The transition relation $\Delta_\Aa$ is defined such that $(q', p', g') \in \Delta_{\Aa}((q, p, g), a)$ iff 
(i) $\Delta_1(q, a) = q'$, 
(ii) $(u, p') \in \Delta_{1}(p, a)$, and
(iii) $g'(r, x) = r'$ and $\Delta_2(r, \val{u(x)}) = r'$, for all $x \in X$ and $r \in S_2$,.
Initial states are the product of initial states i.e. a set $I_{\Aa} = \{(q_0, p_0, r_0)  : q_0 \in I \}$.
The Muller accepting set of $\Aa$ is defined as the collection of all $P \subseteq S_{\Aa}$ such that (i) $\pi_1(P) \in \Ff_{1}$, (ii) $\pi_2(P) \in \Ff$, and (iii) $(\pi_3(P))(r_0, {\out}) \in \Ff_{2}$, where $\pi_i$ is the $i^{th}$ projection.
The size of $\nma{}$ $\Aa$ is exponential in the number variables of $T$, polynomial in the number of states of $M_1$ and $T$.
Thanks to the fact that emptiness of an \nma{} can be determined in \nlogspace{} \cite{Boker18}, emptiness of $\Aa$ having exponential states in the inputs $T$, $M_1$ and $M_2$, can be decided in
\npspace{} and thus, by Savitch's theorem, also in \pspace{}.
\qed
\end{proof}


Since regular relations are definable in \mso{}, they are closed under sequential composition.
In combination with \Cref{thm:mso_nsst_eq,thm:typecheck}, this establishes the necessary conditions for bounded regular model checking with regular relations to be possible.
Thus, we have the following corollary.

\begin{corollary}
Bounded model checking with regular relations is decidable.
\end{corollary}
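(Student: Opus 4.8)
The plan is to reduce bounded model checking to finitely many instances of the type checking problem of \Cref{thm:typecheck}. Fix a $\wnsst{}$ $T$ realizing a regular relation, $\omega$-regular languages $\init$ and $\bad$, and a bound $n \in \nats$; bounded model checking asks whether $\sem{T^k}(\init) \cap \bad = \emptyset$ holds for every $k \leq n$. It therefore suffices to give, for a single fixed $k$, an algorithm deciding this emptiness, and then to run it for each of the finitely many values $k = 0, 1, \ldots, n$ and conjoin the answers.

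First I would construct, for each $k \leq n$, a $\wnsst{}$ $T_k$ with $\sem{T_k} = \sem{T^k}$. Since the regular relations are precisely the \wnmsot{}-definable ones and the latter class is effectively closed under composition \cite{CourcelleEngelfriet12}, one obtains an \wnmsot{} for $\sem{T^k}$ by starting from the \wnmsot{} equivalent to $T$ furnished by \Cref{thm:mso_nsst_eq} and iterating the composition construction $k-1$ times; applying the effective translation of \Cref{thm:mso_nsst_eq} in the reverse direction then yields the desired $\wnsst{}$ $T_k$. (One could instead attempt to compose $\wnsst{}$s directly, but routing through \mso{} makes the effectiveness immediate.)

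Next I would put the reachability query into the shape required by \Cref{thm:typecheck}. The emptiness $\sem{T_k}(\init) \cap \bad = \emptyset$ is equivalent to the inclusion $\sem{T_k}(\init) \subseteq \complement{\bad}$, and $\complement{\bad}$ is again $\omega$-regular. The only mismatch with the type checking problem is its extra precondition $\init \subseteq \dom{T_k}$, which need not hold; this is harmless, since $\dom{T_k}$ is $\omega$-regular---an \nba{} for it is obtained from $T_k$ by discarding the variables and update functions, exactly as in the proof of \Cref{thm:typecheck}---so $L_1 \rmdef \init \cap \dom{T_k}$ is $\omega$-regular and computable, and $\sem{T_k}(\init) = \sem{T_k}(L_1)$ because inputs outside the domain produce no output. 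Feeding $L_1$, $\complement{\bad}$, and $T_k$ to the procedure of \Cref{thm:typecheck} then decides both $L_1 \subseteq \dom{T_k}$ (now vacuously true) and $\sem{T_k}(L_1) \subseteq \complement{\bad}$, i.e. exactly whether $\sem{T^k}(\init) \cap \bad = \emptyset$.

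The main obstacle is not decidability per se but the cost of the iterated composition: each pass through \mso{} incurs a non-elementary blow-up, so the size of $T_k$, and hence the running time of the overall procedure, grows roughly as a tower of exponentials in $k$. Because bounded model checking requires only $n+1$ such checks, this nonetheless yields a terminating decision procedure, which proves the corollary; whether elementary bounds are attainable---for instance via a direct composition construction for $\wnsst{}$s that avoids the logical detour---is left open.
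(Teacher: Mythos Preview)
Your proposal is correct and follows essentially the same route as the paper: closure of regular relations under composition (via the \mso{} characterization together with \Cref{thm:mso_nsst_eq}) yields an \wnsst{} for each $T^k$, and the reachability test is then an instance of type checking (\Cref{thm:typecheck}) with $L_2 = \complement{\bad}$. The paper's justification is just the sentence preceding the corollary and does not spell out your handling of the domain precondition or your complexity remarks, but the underlying argument is the same.
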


Despite the fact that unbounded regular model checking is undecidable, bounded regular model checking provides a refutation procedure.
That is, it allows us to search for a witness for proving the system unsafe.
Unfortunately, we cannot use bounded model checking of this kind to decide if the system does satisfy the desired property.
On the other hand, we identify several special cases of the problem which permit the safety of the system to be verified in finite time.
In general, we assume that $\init \subseteq \overline{\bad}$, where $\overline{\bad}$ is the complement of $\bad$.

\paragraph{Functional Fixed Points.}
The first instance applies when $T$ is functional, i.e. $\sem{T}$ is a function, and relies on the following result of Alur, Filiot, and Trivedi \cite{AlurFiliotTrivedi12}.
\begin{theorem}
	\label{thm:AlurFiliotTrivedi12}
	Given an \wnsst{} $T$, it is decidable if $\sem{T}$ is a function.
	Given a pair of functional \wnsst{}s $T_1$ and $T_2$, it is decidable if $\sem{T_1} = \sem{T_2}$.
\end{theorem}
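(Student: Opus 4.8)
The plan is to reduce both decision problems to emptiness questions for auxiliary $\omega$-automata obtained by product constructions, reusing the ``state-summary'' bookkeeping already developed in the proof of \Cref{thm:typecheck}; as an alternative route, since by \Cref{remark:output} our model is effectively interconvertible with the functional $\omega$-NMT model of \cite{AlurFiliotTrivedi12}, one may instead translate the given \wnsst{}s into that model and invoke the corresponding results there. I sketch the direct argument. For the first claim (deciding whether $\sem{T}$ is a function), I would first pass to Muller acceptance via \Cref{thm:BMequal}, then build the \emph{squared} transducer $T \otimes T$ that on input $w$ simultaneously simulates two runs $r_1,r_2$ of $T$ (with renamed state sets, variable sets, and output variables $f_1,f_2$). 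From it I construct an \nma{} $\B$ whose states additionally carry a bounded ``mismatch profile'' recording the relationship between the two partial outputs accumulated so far --- how far one is ahead of the other and whether a disagreement has already been committed --- updated on each transition by sliding two reading heads along the update images $u(x)$, exactly as the summary map is updated in \Cref{thm:typecheck}. Then $\B$ accepts $w$ iff both simulated runs are accepting and the profile ever records a committed mismatch, so $\sem{T}$ is a function iff $L(\B)=\emptyset$, which is decidable since emptiness of an \nma{} is decidable.

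For the second claim (equivalence of functional $T_1,T_2$) I would proceed in two stages. First decide whether $\dom{T_1}=\dom{T_2}$: each domain is an \wreg{} language (ignore the variables and output mechanism to obtain an $\omega$-automaton), so both inclusions are decidable, and if the domains differ the transductions differ. Assuming a common domain $D$, it remains to decide whether there is $w\in D$ with $\sem{T_1}(w)\neq\sem{T_2}(w)$; since both machines are functional and $w\in D$, \emph{any} accepting run of $T_i$ on $w$ yields $\sem{T_i}(w)$, so --- exactly as above, but now pairing a run of $T_1$ with a run of $T_2$ --- I build an \nma{} that accepts $w$ iff it witnesses two accepting runs whose outputs disagree, and emptiness of this automaton decides equivalence.

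The main obstacle is justifying that the mismatch profile can be kept in \emph{bounded} space, i.e.\ that whenever the two output \wstring{}s differ they already differ within a computable ``delay'' that is expressible by finitely much information along the run; without this, the product automaton would need unbounded counters. This is the delicate ingredient and is the heart of \cite{AlurFiliotTrivedi12} (and, over finite strings, of the functionality and equivalence results of Alur--Deshmukh and Alur--{\v C}ern\'y): it relies on the copyless structure of the updates, via a pumping/Hilbert-basis argument showing that an unboundedly growing delay between the two partial outputs forces those outputs to coincide. Once this lemma is in hand the constructions above are routine, and the same analysis as in \Cref{thm:typecheck} (complementation/emptiness of $\omega$-automata of exponential size) yields an effective procedure.
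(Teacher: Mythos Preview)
The paper does not prove this theorem at all: it is stated as a result of Alur, Filiot, and Trivedi and attributed to \cite{AlurFiliotTrivedi12}, with \Cref{remark:output} supplying the translation between the paper's \wnsst{} model and the model used there. Your ``alternative route'' --- convert via \Cref{remark:output} and invoke \cite{AlurFiliotTrivedi12} --- is therefore exactly what the paper does.

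Your direct product-construction sketch is a reasonable outline of how one would \emph{reprove} the result, and you correctly isolate the only nontrivial step: showing that the mismatch/delay information can be maintained with bounded memory. One caution, though: over \wstring{}s two accepting runs of a copyless \wnsst{} on the same input can produce identical limits while the lag between their finite-stage partial outputs grows without bound (one run may repeatedly flush an accumulating buffer variable into $f$ while the other appends steadily), so ``record how far one output is ahead of the other'' is not by itself a finite-state strategy even when the transducer is functional. The lemma that is actually needed is subtler than a constant delay bound, and its correct formulation and proof --- handling the interaction between unbounded lag, copylessness, and the $\omega$-acceptance condition --- is precisely the technical content of \cite{AlurFiliotTrivedi12} that you defer to. So in substance your direct route, too, rests on the cited work.
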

At every step of the bounded regular model checking procedure, one can check if $T^k$ is functional, if $T^{k+1}$ is functional, and if $\sem{T^k} = \sem{T^{k+1}}$.
If these three conditions hold, then, for all $m \geq 0$, we have that $\sem{T^k} = \sem{T^{k+m}}$.
When this occurs and $\sem{T^k}(\init) \subseteq \overline{\bad}$ holds, it follows that $\sem{T^k} = \sem{T^*}$ and therefore that $\sem{T^*}(\init) \subseteq \overline{\bad}$ which implies $\sem{T^*}(\init) \cap \bad = \emptyset$.
Note that $T^k$ can be functional even when $T$ is not.
To see this, consider a non-functional \wnsst{} $T$ such that $\sem{T}(a^\omega) = \set{b^\omega, c^\omega}$, and $\sem{T}(b^\omega) = d^\omega = \sem{T}(c^\omega)$.
If $a^\omega \in \init$ and $|\sem{T}(w)| = 1$ for every other input $w$ and $a^\omega \notin \im{T}$, then $T^2$ is functional.

\paragraph{Inductive Invariants.}
An alternative approach involves showing that $\sem{T}$ satisfies some inductive invariant.
Select, as a candidate invariant, a regular or \wreg{} language $L$ which is contained in the set of safe states $L \subseteq \overline{\bad}$.
Now, $L$ provides a witness to the unbounded safety of the system if the following pair of conditions are met: (i) $\init \subseteq L$ and (ii) $\sem{T}(L) \subseteq L$.
Together, (i) and (ii) imply that $\sem{T^*}(\init) \subseteq L$, and in combination with the assumption that $L \subseteq \overline{\bad}$ this yields that $\sem{T^*}(\init) \cap \bad = \emptyset$.
The necessary inclusions can be formulated as instances of the type checking problem, and so, given an appropriately chosen inductive invariant in the form of an \wreg{} language, the global safety of such a system may be verified in polynomial space.
This method is easily generalized by searching for $k$-inductive invariants: \wreg{} languages for which there is a $k \in \nats$ such that $\sem{T^k}(L) \subseteq L$.
The $k$-inductive approach complements bounded regular model checking, since, for a given $k$, bounded regular model checking lets us decide if the system is safe for up to $k$ transitions while $k$-induction lets us decide if it is safe after at least $k$ transitions.


\begin{figure}
  \centering
    \begin{tikzpicture}[node distance = 4cm, >= stealth]
      \node[state,initial]   (1) {$s_0$};
      \node[state,accepting] (2)[right of = 1] {$s_1$};
      \node[state,accepting] (3)[right of = 2] {$s_2$};
          \path[->] (1) edge  node[above] {$1  \begin{cases} x := \varepsilon\\ y
              := 0 \\z = 1 \end{cases}$}  (2);
              \path[->] (2) edge[loop above] node[above] {$1 \begin{cases} x:=x1
                  \\ y:= y0 \\z=z \end{cases}$}  (2);
              \path[->] (2) edge node[above] {$0 \begin{cases} x := xy1 \\
                  y:= \varepsilon \\ z=z \end{cases}$} (3);
              \path[->] (3) edge[loop above] node {$0 \begin{cases} x := x0\\
              		y:= \varepsilon \\ z=z \end{cases}$}  (3);
  \end{tikzpicture}
  \caption{An $\wsst{}$ squaring a number with binary expansion of the form $1^n0^\omega$. The output at $s_1$ and $s_2$ is $x$. Notice that this function can not be expressed as a \gsm{}.}
  \label{fig:sst-square}
  \vspace{-2em}
\end{figure}

\section{Conclusion}
We introduced \wnsst{}s as a computational model for regular relations over infinite strings, and showed that the relations definable by \wnsst{} coincide exactly with those definable in \mso{}.
Motivated by potential applications in formal verification, we studied algorithmic properties of these objects and established the minimal theoretical results required for bounded regular model checking to be possible with regular transition relations.
Regular functions and relations provide an intriguing class of models for real valued functions, see \Cref{fig:sst-square} for example.
In \cite{ChaudhuriSankaranarayananVardi13,GormanHKMWWXY19} analytic properties such as continuity and differentiability of real functions encoded by $\omega$-automata have been studied.
Extending this line of research by going beyond standard $\omega$-automata is both theoretically interesting and could be leveraged towards applications involving verification and control of dynamical systems.
The present work indicates the viability of generalizing the automata-theoretic approach to modeling real functions.
With this application in mind, it would be worthwhile to study the approximation techniques developed for traditional regular model checking to see if they generalize to handle regular relations.

\bibliographystyle{llncs/splncs04}
\bibliography{rmc}

\appendix

\section{Formulae for Example 2}
\label{sec:examples}
For ease of notation, we define the following helper formulae 
\begin{align*}
  \mathsf{first}(x) &\rmdef{} \neg \exists y.\ y < x, \\
  \mathsf{prec}(x,y) &\rmdef{} x < y \wedge \neg (\exists z.\ z < y \wedge x < z), \\
  \mathsf{btw}(z,x,y) &\rmdef (x < z \wedge z < y) \vee (y < z \wedge z < x).
\end{align*}
We also abbreviate $\neg x \in X$ as $x \notin X$.
\begin{align*}
    \psi^{1,1}(x, y, X_1) \rmdef{} & x \in X_1 \wedge y \in X_1 \wedge \neg \#(x) \wedge \neg \#(y) \wedge \mathsf{prec}(y, x) \wedge \\
    & \exists z_1.\ \#(z_1) \wedge x < z_1 \wedge \neg (\exists z_2.\ \mathsf{btw}(z_2, x, z_1) \wedge z_2 \notin X_1) \\
		& \\
    \psi^{1,2}(x, y, X_1) \rmdef{} & \neg (x < y \vee y < x) \wedge x \in X_1 \wedge \\
    & (\mathsf{first}(x) \vee \exists z.\ \mathsf{prec}(z, x) \wedge z \notin X_1) \\
		& \\
    \psi^{2,2}(x, y, X_1) \rmdef{} & \mathsf{prec}(x, y) \wedge \Big( \big( \#(x) \wedge \neg (\exists z.\ x < z \wedge \#(z) \big) \vee \\
		& \big(\exists z_1.\ y < z_1 \wedge \#(z_1) \wedge \neg (\exists z_2.\ \mathsf{btw}(z_2, y, z_1) \wedge  z_2 \notin X_1) \\
    & \wedge x \in X_1 \wedge (y \in X_1 \vee \#(y)) \big) \Big) \\
		& \\
    \psi^{2,1}(x, y, X_1) \rmdef{} & \#(x) \wedge x < y \wedge \big( (\neg \#(y) \wedge y \in X_1 \wedge \exists z.\ \mathsf{prec}(y,z) \wedge \#(z)) \\
    & \vee (\#(y) \wedge \neg \exists z.\ \mathsf{btw}(z,x,y) \wedge z \in X_1) \big)
\end{align*}

\section{Undecidability of Unbounded Regular Model Checking}
\label{sec:undec_proof}
We show undecidability of the regular model checking problem in this framework by giving a
reduction from the undecidable halting problem for two-counter machines. 
\begin{framed}
A \emph{two-counter machine} (Minsky machine) $\M$ is a tuple $(L, C)$
where:
${L = \set{\ell_1, \ell_2, \ldots, \ell_n}}$ is the set of
instructions and 
${C = \set{c_1, c_2}}$ is the set of two \emph{counters}. 
There is a distinguished terminal instruction  $\ell_n$ called
HALT and the instructions $L$ are one of the following types:
\begin{description}[leftmargin=1in]
  \item[increment] $\ell_i : c := c+1$;  goto  $\ell_k$,
  \item[decrement] $\ell_i : c := c-1$;  goto  $\ell_k$,
  \item[zero-test] $\ell_i$ : if $(c{>}0)$ then goto $\ell_k$ else goto $\ell_m$,
  \item[Halt] $\ell_n:$ HALT.
\end{description}
where $c \in C$, $\ell_i, \ell_k, \ell_m \in L$.
In the following, we replace $\ell_n$ by $\ell_{halt}$.
Let $I,D,O$ represent the set of increment, decrement and  
zero-check instructions s.t. $L = I \cup D \cup  O$. 
\end{framed}
 
A configuration of a two-counter machine is a tuple $(\ell, c, d)$ where
$\ell \in L$ is an instruction, and $c, d$ are natural numbers that specify the
value
of counters $c_1$ and $c_2$, respectively.
The initial configuration is $(\ell_1, 0, 0)$.
A run of a two-counter machine is a (finite or infinite) sequence of
configurations $\seq{k_1, k_2, \ldots}$ where $k_1$ is the initial
configuration, and the relation between subsequent configurations is
governed by transitions between respective instructions.
The run is a finite sequence if and only if the last configuration is
the terminal instruction $\ell_{halt}$.
Note that a two-counter  machine has exactly one run starting from the initial
configuration. We assume without loss of generality that $\ell_0$ is 
an increment instruction. Clearly, it cannot be a decrement instruction.
If $\ell_1$ were a zero check instruction, we add two dummy instructions $\ell'_1, \ell''_1$
such that $\ell'_1$
increments a counter, and $\ell''_1$ decrements it, and passes control to $\ell_1$.
The dummy instructions $\ell'_1, \ell''_1$ will never again be encountered in the two
counter machine after control passes to $\ell_1$.
 
The \emph{halting problem} for a two-counter machine asks whether 
its unique run ends at the terminal instruction $\ell_{halt}$.
It is well known that the halting problem for two-counter machines is
undecidable. 

Let our alphabet be $A = \{0,1\}$ and let $M = (L,C)$ be a two-counter machine.
A configuration $k_{i} = (\ell_{i},c,d)$ of $M$ can be represented as the string 
$w_{i} = 0^{i}10^{c}10^{d}$.
The intial configuration is then $011$.
For each instruction in $M$, we can construct an SST $T$ such that
$k_{i}\vdash k_{j}$ iff $\sem{T}(w_i) = w_{j}$. Then we can construct a larger transducer
that first reads the instruction portion of the configuration and branches to 
execute the appropriate sub-machine.
Thus, each instruction transducer reads the suffix coming after the first 1.
The instruction transducer for each type of instruction involving $c$ is shown below;
similar machines could be constructed for the instructions involving $d$.
\begin{description}
  \item[increment]
  $\ell_i : c := c+1$;  goto  $\ell_k$,\\
  \begin{figure}[h!]
  \begin{tikzpicture}[> = stealth, shorten > = 1pt,	auto, node distance = 3cm, semithick]
    \tikzstyle{every state}=[draw=black, thick,	fill = white,	minimum size = 4mm]
    \node[initial,state]   (0)             {0};
    \node[state,accepting] (1)[right of=0] {1};
    \node                  (F)[right of=1] {};
    \path[->] (0) edge[loop above] node {$0 \mid x:=x0$}            (0);
    \path[->] (0) edge             node {$1 \mid x:=x01$}           (1);
    \path[->] (1) edge[loop above] node {$a \mid x:=xa$}  (1);
    \path[->] (1) edge node {$0^{k}1x$} (F);
  \end{tikzpicture}
\end{figure}\\\\
\item[decrement]
  $\ell_i : c := c-1$;  goto  $\ell_k$,\\\\
  \begin{figure}[h!]
    \begin{tikzpicture}[> = stealth, shorten > = 1pt,	auto, node distance = 3cm, semithick]
      \tikzstyle{every state}=[draw=black, thick,	fill = white,	minimum size = 4mm]
      \node[initial,state]   (0)             {0};
      \node[state,accepting] (1)[right of=0] {1};
      \node                  (F)[right of=1] {};
      \path[->] (0) edge             node {$0 \mid x:=\varepsilon$}  (1);
      \path[->] (1) edge[loop above] node {$a \mid x:=xa$} (1);
      \path[->] (1) edge node {$0^{k}1x$} (F);
    \end{tikzpicture}
  \end{figure}\\
\item[zero-test]
  $\ell_i$ : if $(c{>}0)$ then goto $\ell_k$
  else goto $\ell_m$,\\
  \begin{figure}[h!]
    \begin{tikzpicture}[> = stealth, shorten > = 1pt,	auto, node distance = 3cm, semithick]
      \tikzstyle{every state}=[draw=black, thick,	fill = white,	minimum size = 4mm]
      \node[initial,state]   (0)                               {0};
      \node[state,accepting] (1)[above right=1cm and 3cm of 0] {1};
      \node[state,accepting] (2)[below right=1cm and 3cm of 0] {2};
      \node    (F)[right of = 1] {};
      \node    (G)[right of = 2] {};
      \path[->] (0) edge             node[sloped] {$1 \mid x:=x1$}           (1);
      \path[->] (0) edge             node[below,sloped] {$0 \mid x:=x0$}           (2);
      \path[->] (1) edge[loop above] node {$a \mid x:=xa$} (1);
      \path[->] (2) edge[loop above] node {$a \mid x:=xa$} (2);
      \path[->] (1) edge node {$0^{k}1x$} (F);
      \path[->] (2) edge node {$0^{m}1x$} (G);
    \end{tikzpicture}
  \end{figure}
\end{description}
The only remaining task is to decide when to execute each sub-machine that we have just defined.
This is achieved by creating one state for each instruction in $L$, and attaching an outgoing
transition on 1 to the $i$th instruction transducer from the $i$th state.
The complete transducer of the system reads in the prefix of 0s in a configuration and executes
the transducer associated with the instruction represented by that unary number.
For example, if $\ell_{i}$ is the instruction given above as decrement, a partial view of thes
complete transducer would look as depicted in Figure~\ref{fig:undec}.
\begin{figure}[h!]
  \begin{tikzpicture}[> = stealth, shorten > = 1pt,	auto, node distance = 3cm, semithick]
    \tikzstyle{every state}=[draw=black, thick,	fill = white,	minimum size = 4mm]
    \node[initial,state] (0)                     {0};
    \node[state]         (i)[right of=0]         {$i$};
    \node[state]         (n)[right of=i]         {$n$};
    \node[state]         (i0)[below=1.5cm of i]  {$i0$};
    \node[state]         (i1)[below=1.5cm of i0] {$i1$};
    \node[state]         (h)[right=1.5cm of n]   {halt};
    \node                (F)[right of = i1]     {};
    \node (x)[below=1.5cm of i1] {};
    \node (y)[right=1.5cm of i0] {};
    \node (z)[below=2cm of 0] {};
    \node[inner sep=5pt,draw=blue,dotted,fit={(i0) (z) (i1) (x) (y)}] (box) {};
    \node[above right,text=blue] at (box.north west) {$\ell_i : c := c-1$;  goto  $\ell_k$};
    \path[->] (0)  edge[dashed]     node {$0^{i}$}             (i);
    \path[->] (i)  edge[dashed]     node {$0^{n-i}$}           (n);
    \path[->] (i)  edge             node {$1 \mid x:=\varepsilon$}  (i0);
    \path[->] (i0) edge             node {$0 \mid x:=\varepsilon$}  (i1);
    \path[->] (i1) edge[loop left] node {$a \mid x:=xa$} (i1);
    \path[->] (n)  edge             node {$1 \mid x:=\varepsilon$}  (h);
    \path[->] (i1) edge node {$0^{k}1x$} (F);
  \end{tikzpicture}
  \caption{}
  \label{fig:undec}
\end{figure}
It is clear that if there exists a way to decide whether a halting configuration ($0^{n}10^{*}10^{*}$) is reachable,
then there must exist a solution to the two-counter machine halting problem.

\end{document}